\documentclass[11pt]{amsart}
\usepackage{fullpage}
\usepackage{mydefs}
\usepackage{amssymb}
\usepackage{graphicx}

\usepackage{amsmath}

\usepackage{graphicx}
\usepackage{amssymb}
\usepackage{multicol}
\usepackage{epstopdf}
\usepackage{mydefs}
\usepackage{algorithm}
\usepackage{algorithmic}

\usepackage{url}

\DeclareGraphicsRule{.tif}{png}{.png}{`convert #1 `dirname #1`/`basename #1 .tif`.png}
\newcommand{\alg}[1]{\textbf{#1}}   

\newcommand{\calA}{\mathcal{A}}      

\title{A Fully Distributed Algorithm for Throughput Performance in Wireless Networks}

\author[E. I.\'Asgeirsson]{Eyj\'olfur I. \'Asgeirsson}
\address[M. M. Halld\'orsson]{School of Science and Engineering\\
 Reykjavik University\\
 101 Reykjavik, Iceland\\}
\email{eyjo@ru.is}

\author[M. M. Halld\'orsson]{Magn\'us M. Halld\'orsson}
\address[M. M. Halld\'orsson]{School of Computer Science\\
 Reykjavik University\\
 101 Reykjavik, Iceland\\}
\email{mmh@ru.is}

\author[P. Mitra]{Pradipta Mitra}
\address[P. Mitra]{School of Computer Science\\
Reykjavik University\\
Reykjavik 101, Iceland}
\email{ppmitra@gmail.com}

\begin{document}

\begin{abstract}
We study link scheduling in wireless networks under stochastic arrival processes
of packets, and give an algorithm that achieves stability in the
physical (SINR) interference model. The efficiency of such an algorithm is the
fraction of the maximum feasible traffic that the algorithm can handle
without queues growing indefinitely.  
Our algorithm achieves two important goals:
(i) efficiency is independent of the size of the network, and (ii) the
algorithm is fully distributed, i.e., individual nodes need no
information about the overall network topology, not even local
information.
%
\end{abstract}

\maketitle 
\section{Introduction}
Designing high-performance scheduling algorithms for wireless networks has become
an increasingly important topic in recent years. Scheduling in a wireless environment is a non-trivial problem,
since simultaneous transmissions interfere with each other in complex ways. 
A two-fold
challenge of appropriately modelling the interference, and then developing algorithms
for that model presents itself. 
Furthermore, in many realistic settings, a centralized controller
cannot be assumed, and algorithms that work in a distributed fashion have to be developed.

In this work, we are interested in stability and the associated
throughput performance of scheduling algorithms for wireless networks under realistic interference models.
We assume that packets arrive at potential senders according to a stochastic process, and the goal
of an algorithm is to schedule these transmissions so that the queues of unscheduled packets at
each sender remain bounded (in which case, the system is called \emph{stable}). 
A rich body of research has been devoted to dealing with this issue in a variety of settings. 
The seminal work of Tassiulas and Ephremides \cite{TE92} established
that an optimal scheduling policy exists,
one that 
stabilizes the system under all arrival rates for which stability is potentially possible.
In most settings, however, such a ``perfect'' solution is
computationally intractable, and additionally a distributed implementation is unlikely.

Hence, the search for efficient and/or distributed algorithms which, if not as good as 
the optimal algorithm, are nevertheless useful.
 Since these algorithms may not
stabilize all feasible arrival processes, 
the concept of \emph{efficiency} of an algorithm has been introduced, 
being the fraction of  $\Lambda$  that the algorithm can stabilize,
where $\Lambda$ is the space of
arrival processes that the optimum algorithm can stabilize. 
There have been many approaches to developing such algorithms. 
A natural step in the search for efficient algorithms is to seek
maximal solutions.
In the context of wireless networks this is known as Greedy Maximal Scheduling (GMS) algorithm
\cite{bestInfocom08} or Longest Queue First (LQF) algorithm
\cite{secordorder}.
The stability and efficiency of LQF has been investigated extensively \cite{bestInfocom08,DBLP:conf/mobihoc/LiBX09,secordorder}.
Many other approaches have been proposed as well 
\cite{DBLP:conf/sigmetrics/ModianoSZ06,DBLP:journals/ton/JooS09,shahfocs11}. 

Most analytic work on wireless networks has been done in 
graph-based interference models
(e.g.\ \cite{DBLP:conf/sigmetrics/ModianoSZ06,bestInfocom08,DBLP:conf/mobihoc/LiBX09,secordorder,shahfocs11}).
In these models, wireless links (a link is a sender-receiver pair) that are neighbors in a specified link-graph cannot transmit simultaneously.
Though interesting in their own right, these models are known to over-simplify interference
coupling~\cite{MaheshwariJD08,Moscibroda2006Protocol}. As a result,
many research communities working on wireless networks have increased their
focus on the so-called \emph{physical model} or the SINR model. In this model,
a transmission is considered successful if the signal received at the intended receiver
is suitably larger than the cumulative interference due to all other transmissions in the network, plus
the ambient noise.  In the SINR model, solving the characterization in \cite{TE92} is equivalent to solving the maximum
weighted \emph{capacity} problem, which is known to be NP-hard \cite{gouss2007}
and additionally has no known constant factor approximation algorithm (quite apart from the issue
of distributed implementation).

In this paper, we develop an algorithm that is completely distributed, with nodes requiring no topological information about the network (not even information about ``neighbors''), and achieves an efficiency ratio that is independent of the network size (i.e., the number of links).
Thus, the algorithm is scalable in relation to network size. It  can also operate in an asynchronous setting, with nodes appearing arbitrarily (as long as the stochastic process meets the required condition).

We give simulation results which lend credence to the theoretical bounds.
Our algorithm is extremely robust under fairly high amount of load (achieving efficiency ratios bordering on $0.5$).

The only other work (that we are aware of) on stability of algorithms in
the SINR model is the recent work by Le \emph{et.\ al.} \cite{lqfmobihoc},
who analyze the stability properties of the LQF algorithm.
The authors show that the basic
LQF is not efficient, but a variation of it that localizes interference is shown to work, with  
efficiency similar to ours. In a related work, \cite{lee09} also consider the SINR model, but the model
there is different (links are always feasible, but have different data rates based on the SINR achieved. This sort of problem is rather different from the ``combinatorial'' situation at hand).

A distinguishing feature of our distributed implementations is that they require
almost
no additional ``infrastructure''. Often distributed algorithms for wireless networks have to
assume another underlying information infrastructure that can be used to run a localized and/or
distributed algorithm, and that infrastructure, moreover, is not subject to the interference constraints
of the original network. This is the case with \cite{lqfmobihoc}, as well as many other works on the topic (\cite{DBLP:conf/sigmetrics/ModianoSZ06,lee09} for example). 
This is a rather strong assumption,
especially in light of the fact that in a wireless network, one is usually trying to establish
such an infrastructure in the first place.
It is interesting that we can do without them while obtaining high throughput performance.

From a technical perspective, we adopt the vocabulary and techniques developed in the context of worst-case algorithmic research on the SINR model (\cite{MoWa06,HW09,KesselheimSoda11,SODA11}). The concept of ``affectance'' (defined later) developed in some of these works turns out to be
quite effective in this context. This approach may have further applications in the study of stability
of wireless networks.

The paper is organized as follows.
In Section \ref{sec:lngthclass}, we describe our algorithm and state
the main result.
In Section \ref{sec:model} we present the system model,
and discuss related work further in Sec.~\ref{sec:results}.
The proof of the stability result is given in Section \ref{sec:proof}.
Finally, in Section \ref{sec:simulations} we present simulation results.

\section{Algorithm and Result}
\label{sec:lngthclass}
The wireless network is modeled as a set $L$ of $n$ links, where
each link $l_u \in L$ represents a potential transmission from a sender
$s_u$ to a receiver $r_u$, each a point in a metric space. 

We assume that packets arrive at the sender of each link $l_u$ according to a stochastic
process with average arrival rate $m_u$.

The  extremely simple and fully distributed algorithm is as follows.

\begin{algorithm}                      
\caption{Reflect (Run by each link $l_u$ in the system)}          
\label{alg2}                           
\begin{algorithmic}[1]                    
     \STATE $Q_u \leftarrow \emptyset$ (queue of outstanding packets)
     \FOR{$t \leftarrow{} 1, 2, \ldots $}
       \STATE Let $\calA$ be the set of packets that arrive at the beginning of time slot $t$
       \STATE Add $\calA$ to the end of $Q_u$
       \IF{$Q_u$ is non-empty}
         \STATE Transmit a packet from $Q_u$ with probability $2.5 \cdot m_u$\label{trline}
       \ENDIF
       \ENDFOR
\end{algorithmic}
\label{alg2fig}
\end{algorithm}

Our main result is:
\begin{theorem}
\label{th:lngthclass}
For all given networks with links on metric spaces,  
and all sublinear, length-monotone  power assignments,
\alg{Reflect} achieves an efficiency ratio independent of $n$.
\end{theorem}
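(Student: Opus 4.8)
The plan is to establish stability through a Lyapunov (Foster) drift argument, using the quadratic potential $V(Q)=\sum_u Q_u^2$ on the vector of queue lengths. Writing $\mu_u(Q)$ for the probability that link $l_u$ completes a transmission in a slot when the queue state is $Q$, the one-step drift of $V$ decomposes (up to a bounded second-moment term from the arrivals) into $2\sum_u Q_u\,(m_u-\mu_u(Q))$. Hence it suffices to exhibit a fixed margin $\delta>0$ and a constant $\rho>0$, both independent of $n$, such that whenever $\{m_u/\rho\}\in\Lambda$ we have $\mu_u(Q)\ge(1+\delta)\,m_u$ for every link with $Q_u>0$; the drift is then bounded above by $-2\delta\sum_u Q_u m_u$ plus a constant, which is negative outside a bounded set, and Foster's criterion gives stability. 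A useful monotonicity observation drives the analysis: adding packets to other queues can only increase interference, so $\mu_u(Q)$ is minimized when every competing queue is nonempty. It therefore suffices to lower bound the success probability of $l_u$ in the \emph{worst case} in which all links transmit, each $l_v$ independently with probability $p_v=2.5\,m_v$.

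Set $\lambda_v:=m_v/\rho$, so that $\{\lambda_v\}\in\Lambda$ and $p_v=2.5\,\rho\,\lambda_v$. A transmission of $l_u$ succeeds precisely when the total affectance it receives from the other transmitting links is below the feasibility threshold, and the expected affectance is $\sum_{v\ne u} a_v(u)\,p_v=2.5\,\rho\sum_{v\ne u}a_v(u)\,\lambda_v$. By the Tassiulas--Ephremides characterization, membership in $\Lambda$ means $\{\lambda_v\}$ is dominated by a convex combination $\lambda_v=\sum_i \alpha_i\,\mathbf 1[v\in S_i]$ of SINR-feasible sets $S_i$ with $\sum_i\alpha_i\le 1$. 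Substituting, $\sum_{v\ne u}a_v(u)\,\lambda_v=\sum_i\alpha_i\sum_{v\in S_i}a_v(u)\le\max_i\sum_{v\in S_i}a_v(u)$, so the entire success analysis reduces to a single purely geometric quantity: the affectance inflicted on one link $l_u$ by an arbitrary feasible set. Once this is bounded by a constant $C$, a Markov bound gives $\Pr[l_u\text{ fails}\mid l_u\text{ transmits}]\le 2.5\,\rho\,C$, and choosing $\rho$ so that $2.5\,\rho\,C\le 1/2$ yields $\mu_u\ge p_u/2=1.25\,m_u$, i.e.\ $\delta=1/4$.

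The crux, and the step I expect to be the main obstacle, is the key lemma: for every length-monotone, sublinear power assignment there is a constant $C$, \emph{independent of $n$}, such that $\sum_{v\in S}a_v(u)\le C$ for every feasible set $S$ and every link $l_u$. The Markov reduction is exactly what forces this lemma, and it cannot be finessed away, since the affectance of a feasible set on an \emph{external} receiver is genuinely not $O(1)$ in general; what saves us is that it does not depend on the number of links. I would prove it by grouping the links of $S$ into length classes according to their length relative to $\ell_u$, bounding the contribution of each class using feasibility of $S$ together with a packing argument in the underlying (doubling) metric space---near $r_u$ the feasibility constraints of $S$ limit how many senders can crowd in, while far from $r_u$ the $d^{-\alpha}$ decay of received power makes the tail summable---and then combining the per-class bounds. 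The length-monotone and sublinear hypotheses enter precisely here, controlling how received signal strength trades off against link length across classes; this is where the affectance machinery imported from worst-case SINR scheduling does the real work, and it is what makes $C$ a function of the path-loss exponent and the length diversity alone rather than of $n$.

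Assembling the pieces, fix $\rho\le 1/(5C)$ (which also makes every $p_v\le 1$, since $p_v=2.5\,\rho\,\lambda_v\le 2.5\,\rho<1$). For this $n$-independent $\rho$, every link with a nonempty queue is served at rate at least $1.25\,m_u$ regardless of the global configuration, the drift of $V$ is negative outside a bounded set, and \alg{Reflect} is stable. Since $\rho$ depends only on the constant $C$ from the key lemma, the efficiency ratio it certifies is independent of the network size, proving Theorem~\ref{th:lngthclass}.
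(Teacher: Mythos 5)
Your overall reduction is the same as the paper's: bound the expected affectance received by a link in one slot by writing the arrival-rate vector as a convex combination of feasible sets, invoke a deterministic bound on the affectance that a single feasible set can inflict on an arbitrary (possibly external) link, apply Markov's inequality to get a service probability of at least $1.25\,m_u$ whenever the queue is nonempty, and conclude stability. (The paper closes with a per-link comparison to a single-server queue with service rate $\frac54 m_u > m_u$ rather than a quadratic Lyapunov function; that difference is essentially cosmetic, and your Foster-criterion framing is arguably the more careful of the two.) The constants also line up: the paper proves the key bound with $C=\kappa\Delta^{\alpha}$, $\kappa=3^{\alpha+1}$, and certifies efficiency $1/(6\kappa\Delta^{\alpha})$, matching your acknowledgement that $C$ depends on the length diversity but not on $n$.

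The gap is exactly where you predicted it: the key lemma (Lemma~\ref{lem:load} in the paper) is asserted rather than proved, and the sketch you offer is not the argument that works at the stated level of generality. A length-class-plus-packing argument needs bounded doubling dimension to control how many senders of a feasible set can crowd into an annulus around $r_u$, whereas the theorem is claimed for links in an \emph{arbitrary} metric space. The paper's proof avoids packing entirely: after a signal-strengthening step (decompose $S$ into $O(1)$ many $3^{\alpha}$-signal sets), it takes the link $l_u\in S$ whose receiver is nearest to the external receiver $r_v$ and shows, using the separation bound $d_{uv}\,d_{vu}\ge q^2\ell_u\ell_v$ for $q^{\alpha}$-signal sets (Lemma~\ref{lem:ind-separation}), that every sender of $S$ --- except possibly the single one closest to $r_v$, whose affectance is capped at $1$ by definition --- lies at distance at least $\frac{1}{2}d(r_v,r_u)$ from $r_v$. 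The triangle inequality then gives $d_{xu}\le 3\,d_{xv}$, so the affectance of each such $l_x$ on $l_v$ is at most $2\cdot 3^{\alpha}\Delta^{\alpha}$ times its affectance on $l_u$ (Observation~\ref{lem:signalincr} supplies the $\Delta^{\alpha}$ factor, and this is the only place sublinearity and length-monotonicity enter), and the latter affectances sum to at most $1$ because $S$ is feasible and $l_u\in S$. Without this step, or some substitute valid in general metrics, your argument establishes the theorem only for doubling metrics, and even there the per-class bookkeeping remains to be carried out.
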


\section{Some Preliminaries}
\label{sec:model}
The distance between two points $x$ and $y$ is denoted $d(x,y)$.
The distance from
$l_u$'s sender to $l_v$'s receiver is denoted $d_{uv} = d(s_{u}, r_{v})$.
The length of link $l_u$ is denoted 
simply by $\ell_u = d(s_u, r_u)$. The link set is associated with a \emph{power assignment} $P$, which is an assignment of a transmission
power $P_u$ to be used by the sender of each link $l_u \in L$.
The signal received at point
$y$ from a sender at point $x$ with power $P_x$  is $P_x/d(x, y)^\alpha$ where the constant 
$\alpha > 0$ is the
\emph{path-loss exponent}. 

We can now describe the  \emph{physical} or SINR-model of interference. In this model, a receiver $r_u$
successfully receives a message from the sender $s_u$ if and only if the
following condition holds:
\begin{equation}
 \frac{P_u/\ell_u^\alpha}{\sum_{l_v \in S \setminus  \{l_u\}}
   P_{v}/d_{vu}^\alpha + N} \ge \beta \ , 
 \label{eq:sinr}
\end{equation}
where $N$ is the environmental noise, the constant $\beta$ denotes the minimum
SINR (signal-to-interference-noise-ratio) required for a message to be successfully received,
and $S$ is the set of concurrently scheduled links in the same \emph{slot} (we assume that time is slotted).
We say that $S$ is \emph{SINR-feasible} (or simply \emph{feasible}) if (\ref{eq:sinr}) is
satisfied for each link in $S$.

A power assignment $P$ is \emph{length-monotone} if $P_v \ge P_w$ whenever
$\ell_v \ge \ell_w$ and \emph{sublinear} if $\frac{P_v}{\ell_v^{\alpha}} \le \frac{P_w}{\ell_w^{\alpha}}$ whenever $\ell_v \ge \ell_w$. This class includes the most interesting and practical power assignments, such as uniform power (all links
use the same power), linear power ($P_u = \ell_u^{\alpha}$, known to be energy efficient in the presence of noise), and
mean power ($P_u = \ell_u^{\alpha/2}$, the assignment that produces maximum capacity in this class \cite{SODA11}). 
Let $\Delta = \frac{\ell_{\max}}{\ell_{\min}}$ where $\ell_{\max}$ and $\ell_{\min}$ are, respectively, the maximum and minimum lengths of links in $L$.

\begin{defn}
The \emph{affectance} $a^P_{v}(u)$ of link $l_u$ caused by another link $l_v$,
with a given power assignment $P$,
is the interference of $l_v$ on $l_u$ relative to the signal
received, or
  \[ a^P_{v}(u) 
     = \min\left\{1, c_u \frac{P_{v}}{P_u} \cdot \left(\frac{\ell_u}{d_{vu}}\right)^\alpha\right\} \ ,
  \] 
where $c_u = \beta/(1 - \beta N \ell_u^\alpha/P_u)$. 
\end{defn}
We need the following assumption.
\begin{assumption}
$c_u \leq 2 \beta$ for any link $l_u$.
\end{assumption}
This is is fairly reasonable assumption. It simply says that in the
absence of other links, the transmission succeeds comfortably. The
constant $2$ is not fundamental; any value greater than 1 would suffice.

Since $c_u \geq \beta$ by definition, this implies that
\begin{equation}
\frac{c_v}{c_u} \leq 2 \text{ for any two links } l_u, l_v
\label{eqn:cvassumption}
\end{equation}

The definition of affectance was introduced in \cite{GHWW09,HW09} and achieved the form
we use in \cite{KV10}.
When clear from the context we drop the superscript $P$. 
Also, let $a^P_v(v) = 0$.
Using affectance, Eqn.\ \ref{eq:sinr} can be rewritten as 
\begin{equation}
a^P_S(u) \equiv \sum_{l_v \in S}a^P_{v}(u) \leq 1\ ,
\end{equation}
for all $l_u \in S$.

\emph{Signal-strength and robustness.}
A \emph{$\delta$-signal} set of links is a set of links where the
affectance on any link is at most $1/\delta$.
A set is SINR-feasible iff it is a 1-signal set. We know:

\begin{lemma}[\cite{esatalg}]
Let $\ell_u, \ell_v$ be links in a $q^\alpha$-signal set.
Then, $d_{uv} \cdot d_{vu} \ge q^2 \cdot \ell_u \ell_v$. 
\label{lem:ind-separation}
\end{lemma}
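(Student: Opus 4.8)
The plan is to exploit the fact that being in a $q^\alpha$-signal set bounds the affectance in \emph{both} directions, and that multiplying the two bounds cancels the power terms (which is exactly why no assumption on the power assignment is needed here). Concretely, since $l_u$ and $l_v$ lie in a $q^\alpha$-signal set, the definition gives $a^P_v(u) \le 1/q^\alpha$ and, symmetrically, $a^P_u(v) \le 1/q^\alpha$.

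The first thing I would check is that the clamping $\min\{1, \cdot\}$ in the definition of affectance can be removed. For $q > 1$ we have $1/q^\alpha < 1$, so if the minimum were attained by the constant $1$ the affectance would equal $1 > 1/q^\alpha$, a contradiction; hence the second argument is the smaller one and the two bounds read
\[
c_u \frac{P_v}{P_u}\left(\frac{\ell_u}{d_{vu}}\right)^\alpha \le \frac{1}{q^\alpha},
\qquad
c_v \frac{P_u}{P_v}\left(\frac{\ell_v}{d_{uv}}\right)^\alpha \le \frac{1}{q^\alpha}.
\]
Multiplying these two inequalities, the ratios $P_v/P_u$ and $P_u/P_v$ cancel, leaving
\[
c_u c_v \left(\frac{\ell_u \ell_v}{d_{uv}\, d_{vu}}\right)^\alpha \le \frac{1}{q^{2\alpha}}.
\]

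Rearranging and taking $\alpha$-th roots yields $d_{uv}\, d_{vu} \ge (c_u c_v)^{1/\alpha}\, q^2\, \ell_u \ell_v$. Since $c_u, c_v \ge \beta$ by definition and $\beta \ge 1$ is the SINR threshold, the factor $(c_u c_v)^{1/\alpha} \ge 1$ may be dropped to obtain $d_{uv}\, d_{vu} \ge q^2 \ell_u \ell_v$, as claimed.

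The only genuinely delicate point—the step I would treat as the main obstacle—is justifying the removal of the clamp, i.e.\ that the unclamped expression really is the active term in the $\min$. The argument above is clean for $q > 1$; the borderline case $q = 1$ (an ordinary feasible set, where the affectance may equal $1$) needs a little extra care, since then a bound of the form $a^P_v(u) \le 1$ does not by itself force $c_u \frac{P_v}{P_u}(\ell_u/d_{vu})^\alpha \le 1$. Everything else is a one-line symmetric multiplication, so the bookkeeping is routine once the clamp is dealt with.
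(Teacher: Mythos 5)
Your proof is correct and is essentially the standard argument behind this lemma in the cited source \cite{esatalg}; the present paper imports the statement without proof, so there is no in-paper argument to diverge from. Two small remarks: dropping the factor $(c_u c_v)^{1/\alpha}$ requires $\beta \ge 1$, which this paper never states explicitly but which is the standard (and in fact necessary) assumption for the lemma to hold in this clean form, and your caveat about the clamp at $q = 1$ is well taken but harmless here, since the paper only ever invokes the lemma with $q = 3$ after signal strengthening.
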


Now we switch to the aspects of this paper related to queueing theory and stochastic processes.
We first define stability.
\begin{defn}
An algorithm \emph{stabilizes} a network for a particular arrival process if, under that
arrival process, the average queue size 
is bounded (at any given time).
\end{defn}
The \emph{throughput region} is then the set of all possible arrival rate
vectors such that there exists some scheduling policy that can stabilize the
network.

As proved in \cite{TE92}, the \emph{throughput region} is characterized by
$$\Lambda = \{\lambda : \lambda \preceq \phi, \text{for some } \phi \in Co(\Omega)\} \ ,$$
where $\Omega$ is the set of all maximal feasible schedules (meaning arrival processes 
that put weight $1$ on a single maximal feasible set) and $Co(\Omega)$ is the convex hull of $\Omega$.
Note that $\lambda$ and $\phi$ are $n$-dimensional vectors and $\lambda \preceq \phi$ means each element
of $\lambda$ is upper bounded by the corresponding one in $\phi$.

Since fast and/or distributed algorithms might not stabilize all of $\Lambda$, one hopes
to achieve a large \emph{efficiency ratio}.
\begin{defn}
The \emph{efficiency ratio} $\gamma$ of a scheduling algorithm 
is
$\gamma = \sup\{\eta: \text{all networks are}$ stabilized for all $\lambda \in \eta \Lambda \}$,
where $\eta \Lambda = \{\eta \lambda : \lambda \in \Lambda\}$.
\end{defn}
We assume that the arrival process on a link is i.i.d.\ across time, and different links are independent of each other. 

We will use $M_i$ to denote both maximal feasible sets, and characteristic vectors of said sets (the usage being clear from context).
For a given efficiency ratio $\gamma$, 
it must hold for all permissible arrival rate vectors $\lambda$ that
$\lambda \preceq \sum_{i} m_i M_i$, where and $m_i$ are weights such that 
\begin{equation}
\label{migamma}
\sum_i m_i  = \gamma \ .
\end{equation}
It can be easily seen that for any link $l_u$, 
\begin{equation}
m_u  = \sum_{i: l_u \in M_i} m_i \leq \gamma \ .
\label{mudef}
\end{equation}

\section{Related work}
\label{sec:results}


As stated in Thm. \ref{th:lngthclass}, the efficiency of the algorithm is independent of 
$n$ (the number of links in the system). It is, however, dependent on another network parameter $\Delta$, the ratio between the longest and the shortest link in the system (the proof in the next section contains the exact expression). The only comparable work
on this model \cite{lqfmobihoc} has the same dependence on $\Delta$ (this is not explicitly stated in the paper, but can be seen to be necessary). The main discriminating feature of our work is that it is distributed in a much stronger sense. The algorithm in \cite{lqfmobihoc} can be characterized as ``localized'', where each link needs to be aware of and have communicated with other links in its neighborhood. We have no need for such infrastructure.

In terms of efficiency ratio, 
a range of results have been derived
in a variety of models. Naturally one seeks efficiency of $1$ whenever
possible \cite{DBLP:conf/sigmetrics/ModianoSZ06}, but results for efficiency ratio of $1$ under certain conditions \cite{secordorder}, or $\frac{1}{2}$ \cite{DBLP:conf/infocom/DaiP00}, or $\frac{1}{6}$ \cite{bestInfocom08} can be found in the literature.
Ratios in terms of certain network characteristics are known as well -- such as in terms of the
degree of the interference graph \cite{DBLP:journals/tit/ChaporkarKLS08} or 
the local pooling factor \cite{bestInfocom08,lqfmobihoc}.
In \cite{mimosinr10}, the abstract SINR model (received signal is a general function instead of being length-based) in the context of MIMO networks is studied. An efficiency ratio based on a  system-specific value (``effective interference number'') is derived, with no direct comparison with distance-based SINR models.
 
For the SINR model, an efficiency ratio that is an ``unconditional'' constant (independent of both $n$ and $\Delta$)
is not known.

%
\section{Proof of Stability}
\label{sec:proof}
We now present a proof of Thm.~\ref{th:lngthclass}.

Note that the probability $2.5 \cdot m_u$ used for link $l_u$ in the algorithm is well-defined, since we claim stability with constant efficiency ratio bounded from above by $\frac{1}{3}$.


We first need the following observation.

\begin{observation}
For any two links $\ell_u$ and $\ell_v$ using a length monotone, sub-linear power assignment, 
$$\frac{P_u \ell_v^{\alpha}}{P_v \ell_u^{\alpha}}  \leq
\Delta^{\alpha}\ .$$
\label{lem:signalincr}
\end{observation}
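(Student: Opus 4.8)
The plan is to reduce the claim to a short case analysis on the relative lengths of the two links, playing the two available structural hypotheses—length-monotonicity and sublinearity—against each other. First I would rewrite the target quantity so that the power ratio and the length ratio are separated:
\[
\frac{P_u \ell_v^{\alpha}}{P_v \ell_u^{\alpha}} = \frac{P_u}{P_v}\cdot\frac{\ell_v^{\alpha}}{\ell_u^{\alpha}} = \frac{P_u/\ell_u^{\alpha}}{P_v/\ell_v^{\alpha}}\ .
\]
The second form makes transparent that the statement is really about the ratio of received signal strengths, and it is sublinearity that governs exactly this kind of ratio.

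In the case $\ell_u \ge \ell_v$, I would invoke sublinearity with $l_u$ playing the role of the longer link: $P_u/\ell_u^{\alpha} \le P_v/\ell_v^{\alpha}$, equivalently $P_u/P_v \le \ell_u^{\alpha}/\ell_v^{\alpha}$. Multiplying by the length factor $\ell_v^{\alpha}/\ell_u^{\alpha} \le 1$ cancels the two length terms, so the whole expression is at most $1 \le \Delta^{\alpha}$. Here sublinearity does all the work and the bound obtained is in fact much stronger than required.

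In the complementary case $\ell_u < \ell_v$, I would instead lean on length-monotonicity: since $\ell_v > \ell_u$, we get $P_v \ge P_u$, hence $P_u/P_v \le 1$. The remaining length factor is controlled crudely by the definition of $\Delta$, namely $\ell_v^{\alpha}/\ell_u^{\alpha} \le (\ell_{\max}/\ell_{\min})^{\alpha} = \Delta^{\alpha}$, so the product is at most $\Delta^{\alpha}$. The two cases exhaust all pairs, yielding the uniform bound.

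There is no genuine obstacle here; the only point requiring care is selecting the right hypothesis in each regime—sublinearity is sharp precisely when $l_u$ is the longer link, monotonicity when it is the shorter—and checking that the two regimes together cover every pair. I would close by remarking that the bound is tight: under uniform power with $\ell_u = \ell_{\min}$ and $\ell_v = \ell_{\max}$, the ratio equals exactly $\Delta^{\alpha}$.
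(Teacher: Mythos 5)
Your proof is correct and follows exactly the same two-case argument as the paper: sublinearity handles the case $\ell_v \le \ell_u$ (giving a bound of $1$), and length-monotonicity plus the definition of $\Delta$ handles the case $\ell_u \le \ell_v$. The added remark on tightness under uniform power is a nice touch but not needed.
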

\begin{proof}
If $\ell_v \le \ell_u$, it holds by sub-linearity that
$P_u/\ell_u^\alpha \le P_v / \ell_v^\alpha$.
Otherwise, if $\ell_u \le \ell_v$, then by monotonicity $P_u \le P_v$
and by definition of $\Delta$, $\ell_v^\alpha/\ell_u^\alpha \le \Delta^\alpha$.
\end{proof}

The following key lemma shows that no link is affected too much by any
single feasible set.

\begin{lemma}
Consider a feasible set $S$ and a link  $l_v$ (not necessarily a member of $S$).
Then,
\begin{equation}
\sum_{l_z \in S} a_{z}(v) \leq \kappa \cdot \Delta^{\alpha}\ ,
\end{equation}
for some constant $\kappa$.
\label{lem:load}
\end{lemma}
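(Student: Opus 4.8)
The plan is to bound the raw interference that the external receiver $r_v$ sees from $S$, and then convert it back to affectance, paying a single factor of $\Delta^\alpha$ through Observation~\ref{lem:signalincr}. Writing out the definition and ignoring the truncation at $1$ for the moment,
\[
\sum_{l_z\in S} a_z(v) \;=\; c_v\,\frac{\ell_v^\alpha}{P_v}\sum_{l_z\in S}\frac{P_z}{d_{zv}^\alpha},
\]
and since $c_v\le 2\beta$ it suffices to bound the interference sum $I_v:=\sum_{l_z\in S}P_z/d_{zv}^\alpha$ by $O(\Delta^\alpha)\cdot P_v/\ell_v^\alpha$. The difficulty is that feasibility of $S$ only controls the interference experienced at the receivers of $S$ itself, whereas $r_v$ is an essentially arbitrary external point.

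First I would transfer the feasibility bound to $r_v$. Feasibility gives, for every $l_w\in S$, the inequality $\sum_{l_z\in S}P_z/d_{zw}^\alpha\le \tfrac1\beta\,P_w/\ell_w^\alpha$. Choosing $l_w$ to be the link of $S$ whose receiver is closest to $r_v$ and writing $D=d(r_w,r_v)$, the triangle inequality gives $d_{zw}\le d_{zv}+D\le 2\,d_{zv}$ for every sender with $d_{zv}\ge D$. Hence the contribution of these ``far'' senders to $I_v$ is at most $2^\alpha\sum_{l_z\in S}P_z/d_{zw}^\alpha\le (2^\alpha/\beta)\,P_w/\ell_w^\alpha$. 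Feeding this back and applying Observation~\ref{lem:signalincr} to the pair $l_v,l_w$, namely $\frac{P_w\,\ell_v^\alpha}{P_v\,\ell_w^\alpha}\le\Delta^\alpha$, bounds the far part of $\sum_{l_z\in S} a_z(v)$ by $\tfrac{2^\alpha c_v}{\beta}\Delta^\alpha=O(\Delta^\alpha)$, which is exactly the target order.

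It then remains to bound the ``near'' senders, those with $d_{zv}<D$. Here I would discard the interference estimate and instead use the truncation in the definition of affectance: each such link contributes $a_z(v)\le 1$, so it suffices to bound their number. Since $S$ is feasible it is a $1$-signal set, so Lemma~\ref{lem:ind-separation} forces the senders of any two comparable-length links to be geometrically separated, which yields a packing bound on how many links of a given length class can have their senders inside the ball of radius $D$ around $r_v$. Summing the truncated contributions over length classes and over distance shells then contributes a further $O(\Delta^\alpha)$ term, and combining the two parts and folding in constants gives the claimed $\kappa\,\Delta^\alpha$.

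The main obstacle is exactly this interplay between the near senders and the external-point issue: because feasibility says nothing directly about the interference at $r_v$, the bound must be argued geometrically, and one must ensure that the count of short, high-power interferers packed near $r_v$ does not blow up with $n$ nor introduce a spurious dependence on the number of length classes. Making the shell-by-shell geometric series converge relies on the standard relation between the path-loss exponent $\alpha$ and the (doubling) dimension of the underlying metric, and arranging that the power-assignment distortion costs no more than the single factor $\Delta^\alpha$ is where Observation~\ref{lem:signalincr} and sublinearity have to be invoked globally rather than term by term.
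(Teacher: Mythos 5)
Your treatment of the far senders is sound and is in fact a slightly different route from the paper's: you bound the aggregate interference at $r_v$ by the aggregate interference at the nearest receiver $r_w$ of $S$ and pay the $\Delta^\alpha$ factor once via Observation~\ref{lem:signalincr}, whereas the paper compares each link's affectance on $l_v$ term by term to its affectance on a member of $S$. The gap is in the near part, and it is twofold. First, you invoke Lemma~\ref{lem:ind-separation} on $S$ as a mere $1$-signal set, which gives only $d_{xy}\cdot d_{yx}\ge \ell_x\ell_y$; this is satisfied even by two coincident equal-length links, so it yields no packing bound at all. The paper avoids this by first decomposing $S$ into constantly many $3^\alpha$-signal sets (signal strengthening), for which the lemma gives $d_{xy}\cdot d_{yx}\ge 9\,\ell_x\ell_y$ and hence genuine separation.

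Second, and more fundamentally, even after strengthening, the number of senders of $S$ inside the ball of radius $D$ around $r_v$ is not $O(\Delta^\alpha)$, nor bounded by any function of $n$ and $\Delta$. Since every receiver of $S$ lies at distance at least $D$ from $r_v$, a link whose sender is at distance $(1-\eta)D$ must have length at least $\eta D$, and about $1/\eta$ such links (with pairwise sender separation of order $\eta D$) can be packed into the annulus of width $\eta D$ just inside the sphere of radius $D$; taking $\eta D\approx\ell_{\max}$ this count grows like $D/\ell_{\max}$, which is unbounded as $D$ grows. Charging each of these links the truncated affectance $1$ therefore does not give a convergent sum; you would have to use their actual affectances, of order $\Delta^\alpha\eta^{\alpha}$ each, which is exactly the interference computation the truncation was supposed to let you avoid, and whose shell-by-shell summation only converges when $\alpha$ exceeds the doubling dimension. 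The paper sidesteps all of this by shrinking the near radius to $D/2$: a sender within $D/2$ of $r_v$ belongs to a link of length greater than $D/2$, and the $3^\alpha$-signal separation then forces there to be at most \emph{one} such link, which is charged affectance $1$, while every link with $d_{xv}\ge D/2$ is handled by the far-type comparison. You would need either this one-near-link observation or a correct shell decomposition using the actual affectance values to close the argument.
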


\begin{proof}
We use the signal strengthening technique of \cite{HW09}. For this, we decompose 
the set $S$ to $\lceil 2 \cdot 3^\alpha/\beta \rceil^2$ sets, each a $3^\alpha$-signal set.
We prove the claim for one such set; since there are only constantly
many such sets, the overall claim holds (with the appropriate increase
in the constant factor).
Let us reuse the notation $S$ to be such a $3^\alpha$-signal set.

Consider the link $l_u = (s_u, r_u) \in S$ such that $d(r_{v}, r_{u})$ is minimum.
Also consider the link $l_w = (s_w, r_w) \in S$ such that $d(s_{w}, r_{v})$ is minimum.
Let $D = d( r_v, r_{u})$. We claim that for all links $l_x = (s_x,
r_x) \in S$ with $\ell_x \neq \ell_w$, it holds that
\begin{equation}
d(s_{x}, r_v) \geq \frac{1}{2} D \ .
\label{eqn:dist1}
\end{equation}
To prove this, assume, for contradiction, that $d(s_{x}, r_v) < \frac{1}{2} D$. Then,  $d(s_{w}, r_v)  < \frac{1}{2} D$, by definition of $l_w$.
Now, again by the definition of $l_u$, $d(r_{x}, r_v) \geq D$ and $d(r_{w}, r_v) \geq D$. Thus $\ell_w \geq d(r_w, r_v) - d(r_v, s_w) > \frac{D}{2}$
and similarly $\ell_{x} > \frac{D}{2}$. On the other hand $d(s_{w}, s_{x}) < \frac{D}{2} + \frac{D}{2} = D$.
Now, $d_{w x} \cdot d_{x w} \leq (\ell_{w} + d(s_{w}, s_{x}))(\ell_{x} + d(s_w, s_{x})) <  (\ell_{w} + D)(\ell_{x} + D)
< 9 \ell_{w} \ell_x$, contradicting Lemma \ref{lem:ind-separation}. 

Now that we have proven Eqn~\ref{eqn:dist1}, 
by the triangle inequality, $d_{ x u}  = d(s_{x}, r_{u}) \leq d(s_{x}, r_v) + d(r_v, r_{u}) \leq 3 d(r_v, s_{x}) = 3 d_{ x v}$. 
Applying Obs.~\ref{lem:signalincr}, we see that
\[ \frac{a_{x}(v)}{a_{x}(u)} \leq \frac{c_v}{c_u} \cdot  \frac{P_u}{P_v} \cdot  \frac{d_{xu}^{\alpha}}{d_{xv}^{\alpha}}
\cdot  \frac{\ell_{v}^{\alpha}}{\ell_{u}^{\alpha}}
 \leq 2 \cdot 3^{\alpha} \frac{P_u \ell_v^{\alpha}}{P_v \ell_u^{\alpha}} 
 \le 2 \cdot 3^{\alpha} \Delta^\alpha \ .  \]
where $\frac{c_v}{c_u} \leq 2$ follows from Eqn. \ref{eqn:cvassumption}.
Finally, summing over all links in $S$,
\begin{align*}
a_S(v) & = \sum_{l_z \in S}  a_{z}(v)  = a_{w}(v) + \sum_{l_z  \in S \setminus \{l_w\}}  a_{z}(v) \\
 & \leq 1 + 2 \cdot (3\Delta)^\alpha \sum_{l_z \in S  \setminus \{l_w\}} a_{z}(u) \\
 & \leq 1 + 2 \cdot (3\Delta)^\alpha \ ,
\end{align*}
where we use  $a_{w}(v) \leq 1$ by the definition of affectance, and 
$\sum_{l_z \in S  \setminus \{l_w\}} a_{z}(u) \leq 1$ since $S$ is feasible and $l_u \in S$.

This completes the proof setting $\kappa = 3^{\alpha + 1}$.
\end{proof}

We turn to the proof of Thm.~\ref{th:lngthclass}.
\begin{proof}  
We claim an efficiency ratio of $1/(6 \kappa \Delta^{\alpha})$ where $\kappa$ is the constant from Lemma 
\ref{lem:load}. 
Thus, it is enough to prove stability for all stochastic processes for which the following holds:
\begin{equation}
\sum_{i : M_i} m_i \leq \frac{1}{6 \kappa \Delta^{\alpha}}\ .
\label{eqn:efficiencybound}
\end{equation}

Consider the affectance on any link $l_u$
during the execution of the algorithm in a single slot, and denote it by $a(u)$. This
can be computed as $a(u) = \sum_{l_v} X_{v} a_{v}(u)$,
where $X_{v}$ is a Bernoulli random variable which is $1$ iff link $l_v$ has a non-empty
queue and chooses to transmit during the same slot. Now,
\begin{align*}
\Ex(a(u)) & = \sum_{l_v} \Ex(X_{v}) a_{v}(u)  \overset{1}{\leq} \sum_{l_v} 2.5 m_{v} a_{v}(u) \\& \overset{2}{=} 2.5 \sum_i m_i \sum_{M_i \ni l_v} a_{v}(u) \\
  & \overset{3}{\leq} 2.5 \sum_i m_i \kappa \Delta^{\alpha} \overset{4}{\leq} \frac{2.5 \kappa \Delta^{\alpha}}{6 \kappa \Delta^{\alpha}} = \frac{5}{12}\ ,
\end{align*}
where explanations of the numbered (in)equalities are:
\begin{enumerate}
\item $\Ex(X_{v}) \leq 2.5 m_{v}$ by the description of the algorithm.
\item By Eqn.~\ref{mudef} and rearrangement.
\item By Lemma \ref{lem:load}.
\item By Eqn.~\ref{eqn:efficiencybound}.
\end{enumerate}

Thus, with probability at least $\frac{1}{2}$, $a(u) < 1$ (by Markov's inequality). 
Hence, if $l_u$ has a non-empty queue, with probability at least
$\frac{5}{4}  \cdot m_u$, the queue size decreases. Note that the probability 
can potentially be higher than
$\frac{5}{4} \cdot m_u$, but never smaller. Therefore, 
this system is at least as efficient as the system
where in each slot the queue size reduces by $1$ with i.i.d.\ probability exactly $\frac{5}{4} \cdot m_u$.

The queue dynamics on a single link become equivalent to the following single server system 
with slotted time and an infinite queue. In this system, at the beginning of each time slot, $A$ packets arrive, where $A$ is a random variable on the non-negative
integers, with $\Ex(A) = m_u$. At the end of each slot, the server processes $D$ packets 
(or empties the queue), where
$D$ is a Bernoulli random variable with $\Ex(D) = \frac{5}{4} \cdot
m_u > m_u = \Ex(A)$. Since the departure process is faster than the arrival process, the stability of the queue is guaranteed by basic results in queueing theory \cite{Asmussen}. 
\end{proof}

We note that for linear power assignment, the dependence on $\Delta$
in Obs.~\ref{lem:signalincr} completely disappears, and thus also in Lemma \ref{lem:load} and
Thm.~\ref{th:lngthclass}.

\begin{corollary}
For all given networks with links on metric spaces,  
using linear power assignment,
\alg{Reflect} achieves an efficiency ratio that is an absolute constant (independent of $n$ and $\Delta$).
\end{corollary}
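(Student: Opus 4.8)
The plan is to specialize the entire proof chain---Observation~\ref{lem:signalincr}, Lemma~\ref{lem:load}, and Theorem~\ref{th:lngthclass}---to the linear power assignment $P_u = \ell_u^\alpha$, and to verify that the \emph{only} source of the $\Delta^\alpha$ factor vanishes. The first step is to recompute the key ratio from Observation~\ref{lem:signalincr}. Under linear power, $\frac{P_u \ell_v^\alpha}{P_v \ell_u^\alpha} = \frac{\ell_u^\alpha \ell_v^\alpha}{\ell_v^\alpha \ell_u^\alpha} = 1$, so the bound sharpens from $\Delta^\alpha$ to exactly $1$, with no dependence on link lengths whatsoever.

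Next I would retrace the proof of Lemma~\ref{lem:load} verbatim with this improved ratio. The geometric steps that establish (\ref{eqn:dist1}) and the consequence $d_{xu} \le 3\, d_{xv}$ rely only on the $3^\alpha$-signal property via Lemma~\ref{lem:ind-separation}; they mention no power assignment and are therefore unchanged. The single place the power ratio enters is the affectance comparison $\frac{a_x(v)}{a_x(u)} \le 2 \cdot 3^\alpha \frac{P_u \ell_v^\alpha}{P_v \ell_u^\alpha}$, which now reads $\frac{a_x(v)}{a_x(u)} \le 2 \cdot 3^\alpha$. Summing over a single $3^\alpha$-signal set exactly as before yields $a_S(v) \le 1 + 2 \cdot 3^\alpha \le \kappa$ with the same $\kappa = 3^{\alpha+1}$, and the constant-factor decomposition into $\lceil 2 \cdot 3^\alpha/\beta \rceil^2$ signal sets contributes the same multiplier as in the original. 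The net effect is that the conclusion of Lemma~\ref{lem:load} holds with $\Delta^\alpha$ replaced by $1$: $\sum_{l_z \in S} a_z(v) \le \kappa$.

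I would then follow the proof of Theorem~\ref{th:lngthclass} with this sharpened lemma. Wherever the theorem invoked $\sum_{M_i \ni l_v} a_v(u) \le \kappa \Delta^\alpha$ (step 3 of the $\Ex(a(u))$ computation), I instead use $\sum_{M_i \ni l_v} a_v(u) \le \kappa$. Correspondingly, the permissible-load condition (\ref{eqn:efficiencybound}) weakens from $\frac{1}{6\kappa\Delta^\alpha}$ to $\frac{1}{6\kappa}$, and with this adjustment the same chain of inequalities still terminates at $\Ex(a(u)) \le \frac{5}{12}$. Markov's inequality again gives $a(u) < 1$ with probability at least $\frac12$, so each nonempty queue drains with probability at least $\frac54 m_u > m_u$, and the single-server queueing argument of \cite{Asmussen} closes the proof identically.

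I do not expect a genuine obstacle, since the corollary is a specialization rather than a fresh argument; the only substantive task is to confirm that Observation~\ref{lem:signalincr} is the \emph{unique} entry point for $\Delta$. Inspecting the chain bears this out: the signal-strengthening decomposition produces a constant number of subsets, Lemma~\ref{lem:ind-separation} and the triangle-inequality distance bounds carry no length-diversity dependence, and the final queueing step depends only on the mean comparison $\Ex(D) > \Ex(A)$. Hence under linear power the efficiency ratio is the absolute constant $\frac{1}{6\kappa} = \frac{1}{6 \cdot 3^{\alpha+1}}$, independent of both $n$ and $\Delta$.
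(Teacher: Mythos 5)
Your proposal is correct and follows exactly the route the paper takes: the paper's justification for the corollary is precisely the remark that under linear power the ratio in Observation~\ref{lem:signalincr} becomes $1$, so the $\Delta^\alpha$ factor disappears from Lemma~\ref{lem:load} and hence from Theorem~\ref{th:lngthclass}. Your more detailed tracing of the chain, confirming that Observation~\ref{lem:signalincr} is the sole entry point for $\Delta$, is a faithful elaboration of the same argument.
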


\noindent \textbf{Remark:}
In \alg{Reflect} we have assumed that each link $l_u$ knows $m_u$. In practice, this can be easily approximated at time $t$ by $\min\{1, \frac{A(t)}{t}\}$ (where $A(t)$ is the number
of packet arrivals up to time $t$), which
converges to the right value almost surely.

\subsection{Link partitioning}
If link lengths are known beforehand and some
pre-processing is allowed, the efficiency can be made
to have a better dependence on $\Delta$, specifically, we can achieve an efficiency ratio of $1/(6 \cdot 2^{\alpha} \kappa \log \Delta)$.

We can partition the link set into a collection of nearly equi-length
link sets, i.e., sets where the lengths in the set vary by at most a factor of $2$.
It is easy to show that a link set $L$ can be be partitioned into at most $\log_2 \Delta + 1$ sets of nearly equi-length links $L_r$ for $r = 1 \ldots \log_2 \Delta + 1$ where $L_r$ contains links of lengths in $[2^{r-1} \cdot \ell_{\min}, 2^r \cdot \ell_{\min})$.

We partition the time slots accordingly, a time slot $t$
is used to schedule links from class $L_r$ where $r = ((t - 1) \bmod{(\log_2 \Delta + 1)}) + 1$. 

With the partition, the arrival process on a sequence of slots devoted to a single length
class is equivalent to the setting where all links are nearly equi-length 
and assuming that $\sum_i m_i = 1/(6\cdot 2^{\alpha}\kappa)$.\footnote{There are certain technicalities here,
since a) $M_i \cap L_t$ may not be a \emph{maximal} feasible set in $L_t$ and b) Two sets $M_i$ and $M_j$ 
may have the same ``projection'' in $L_t$, i.e., it is possible that $M_i \cap L_t = M_j \cap L_t$ for $M_i \neq M_j$.
These can be handled in a straightforward way.}
This partitioning combined with Lemma \ref{lem:load}
 proves the claimed efficiency.

%
%

\section{Simulations}
\label{sec:simulations}

To see how the distributed algorithm \alg{Reflect} performs, we ran simulations on instances based on random topology.  
The problem instances were created by generating random links in a rectangle with side length $100$.  The length of the links were uniform random variables between $\ell_{\min}$ and $\ell_{\max}$, which we set as $1$ and $20$ respectively.  We generated random transmission requests for 100,000 time slots while running the algorithms.  Our focus was on the behavior of the maximum queue length, i.e., the largest number of waiting transmission requests over all the links, which was measured every 10,000 rounds. 

\begin{figure}
\begin{center}
\includegraphics[width=0.45\textwidth]{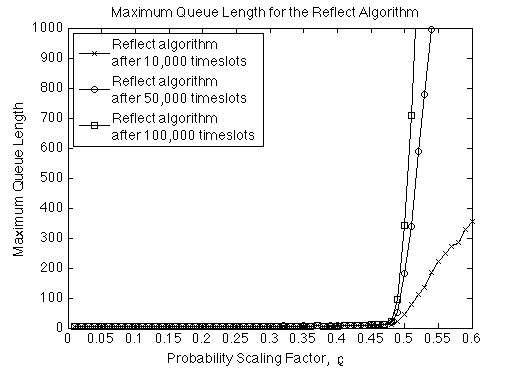}
\caption{The maximum queue lengths for the distributed algorithms Distr-SingleLink and Reflect.   The problem instances are based on random topology with $n = 200$, $\ell_{\min} = 1$, $\ell_{\max} = 20$, $\alpha = 2.5$ and $\beta = 1$.} \label{fig:maxqueue}
\end{center}
\end{figure}

Figure \ref{fig:maxqueue} shows the results of the distributed algorithm \alg{Reflect} for random instances with $200$ links after 10,000, 50,000 and 100,000 time slots.  The probability scaling factor, $\rho$, defines the load on the system.  Thus when $\rho = 1$, the system will, in expectation, receive a maximal feasible set in every round, while if the scaling factor is zero, no requests are generated.  The efficiency ratio of an algorithm is then equal to the largest $\rho$ such that the algorithm is stable.  We used a granularity of $0.01$ for values of $\rho$ between $0.01$ and $0.6$ and took the average over $10$ runs for each value of $\rho$.  

 The \alg{Reflect} algorithm used in Figure \ref{fig:maxqueue} approximates the arrival rate of requests for each link, as mentioned in an earlier remark, instead of assuming that the links know the arrival rate of requests.


It is interesting to note, in Figure \ref{fig:maxqueue}, that there is a very sharp threshold where the algorithm is no longer stable.  As soon as the algorithm becomes unstable, the queue lengths increase very rapidly.  As expected, the centralized Longest Queue First (LQF) algorithm was more stable than the distributed algorithm, managing to keep the maximum queue length below $2$ for all values of $\rho \leq 0.6$ and not becoming unstable until $\rho > 0.9$. We note here again that though our algorithm has a lower (though still high) throughput,
its main feature is the lack of a requirement for a centralized or even 
localized control. 

\newpage



\bibliographystyle{plain}
\bibliography{./references}		

\end{document}